\documentclass[aps,prl,reprint,superscriptaddress]{revtex4-2}

\usepackage{graphicx}
\usepackage{dcolumn}
\usepackage{bm}

\usepackage{physics}
\usepackage{graphicx}
\usepackage{tikz}
\usepackage{amsmath}
\usepackage{amsthm}
\usepackage{amsfonts}
\usepackage{mathrsfs}
\usepackage{mathtools}

\newtheorem{dfn}{Definition}
\newtheorem{thm}{Theorem}
\newtheorem{prop}{Proposition}

\begin{document}

\preprint{APS/123-QED}

\title{Uniqueness of imaginarity-assisted transformation from computationally universal to strictly universal quantum computation}

\author{Yasuaki Nakayama}
\email{yasuaki.nakayama@ntt.com}
\affiliation{NTT Communication Science Laboratories, NTT Inc., 3-1 Morinosato Wakamiya, Atsugi, Kanagawa 243-0198, Japan}
\author{Yuki Takeuchi}
\email{Takeuchi.Yuki@bk.MitsubishiElectric.co.jp}
\affiliation{Information Technology R\&D Center, Mitsubishi Electric Corporation, 5-1-1 Ofuna, Kamakura, Kanagawa 247-8501, Japan}
\affiliation{NTT Communication Science Laboratories, NTT Inc., 3-1 Morinosato Wakamiya, Atsugi, Kanagawa 243-0198, Japan}
\affiliation{NTT Research Center for Theoretical Quantum Information, NTT Inc., 3-1 Morinosato Wakamiya,
Atsugi, Kanagawa 243-0198, Japan}
\author{Seiseki Akibue}
\email{seiseki.akibue@ntt.com}
\affiliation{NTT Communication Science Laboratories, NTT Inc., 3-1 Morinosato Wakamiya, Atsugi, Kanagawa 243-0198, Japan}
\affiliation{NTT Research Center for Theoretical Quantum Information, NTT Inc., 3-1 Morinosato Wakamiya,
Atsugi, Kanagawa 243-0198, Japan}
%





\begin{abstract}
The computational universality with an elementary gate set $\{H,CCZ\}$ can be transformed to the strict universality by using a maximally imaginary state $\ket{+i}$ and some non-imaginary ancillary qubits.
From the viewpoint of operational resource theory, it would be intriguing to elucidate a resource for the universality transformation.
In this paper, we explore a necessary and sufficient condition for resource states to realize the universality transformation under free real operations.
We show that $\ket{+i}$ is a unique resource state up to the free operations.
Moreover, we obtain a stronger conclusion. If a given resource state cannot be used for the universality transformation, then realizable quantum gates are restricted to real orthogonal matrices.
Therefore, we can tell that $\ket{+i}$ is unique (up to the free operations) not only as a state whose resource measure of imaginarity is maximal, but also as a state which empowers real operations with the ability to apply at least one non-real quantum gate (regardless of the magnitudes of its imaginary parts).



\end{abstract}

\maketitle


Quantum computer is thought to be more powerful than classical computer since quantum algorithms such as Shor's prime factoring algorithm \cite{Shor97}, Grover search \cite{Grover96}, and quantum simulation \cite{Feynman82, Lloyd96, Kassal08} are elucidated to be faster than known classical algorithms. In quantum computation \cite{Nielsen10, Watrous18}, first we input multiple qubits and then apply an appropriate unitary matrix to them in order to generate an output quantum state in which an answer of the problem is embedded. Finally, we read out the answer by measuring the whole or a part of the output qubits. For acting the unitary matrix, we do not implement it directly but decompose it to a combination of elementary quantum gates each of which acts on constant numbers of qubits.

There are two classes of universal quantum computation: strict universality and computational universality \cite{Aharonov03}. An elementary gate set is called strictly universal if it can approximate any unitary matrix with arbitrary precision. By using a strictly universal gate set, we can generate any desired quantum state, and hence it is thought that the strict universality is useful for computational tasks which need to prepare quantum states. On the other hand, an elementary gate set is called computationally universal if it can highly efficiently simulate any output probability distribution obtained by a quantum circuit (consisting of an arbitrary elementary gate set) with arbitrary precision. The computationally universal but not strictly universal gate set, which we simply call computationally universal in this paper, is sufficient to implement quantum computation whose output is classical such as prime factorization \cite{Shor97} but not sufficient for information processing which needs the creation of quantum states and computational tasks which are implemented by distributed quantum computers \cite{Renou21, Wu24}. An example of a strictly universal quantum gate set is $\{ H,\Lambda(S) \}$ \cite{Kitaev97}, while an example of a computationally universal one is $\{H,CCZ\}$ \cite{Shi02, Aharonov03} where the Hadamard gate $H$, the controlled-$S$ gate $\Lambda(S)$ and the controlled-controlled-$Z$ gate $CCZ$ are defined in the computational basis as follows:
\begin{equation}
    \begin{aligned}
        &H = \frac{1}{\sqrt{2}}\begin{bmatrix}
        1 & 1 \\
        1 & -1 \\
    \end{bmatrix}, \, S= \begin{bmatrix}
        1 & 0 \\
        0 & i \\
    \end{bmatrix}, \, Z= \begin{bmatrix}
        1 & 0 \\
        0 & -1 \\
    \end{bmatrix},\\
    &\Lambda(S) = \ket{0}\bra{0} \otimes I + \ket{1}\bra{1} \otimes S \, ,\\
    &CCZ = (I^{\otimes 2} - \ket{11}\bra{11}) \otimes I + \ket{11}\bra{11} \otimes Z \, ,
    \end{aligned}
\end{equation}
where $I$ is the two-dimensional identity gate.

The computational universality of $\{H,CCZ\}$ can be verified by the fact that it does not contain imaginary numbers. $\{H,CCZ\}$ can generate any real orthogonal matrix densely \cite{Amy23} but cannot generate all unitary matrices. A natural question is by which resource state we can realize the universality transformation: transformation from the computational universality to the strictly one.
This is important from a practical standpoint, as resourceful dynamics—such as unitary evolutions containing imaginarity—can be implemented by using the universality transformation as long as the resource state is stored as a static resource.
In the previous work \cite{Takeuchi24}, $\{H,CCZ\}$ is transformed to the strictly universal $\{H,\Lambda(S)\}$ as follows. $\Lambda(S)$ is constructed by using $\{ H,CCZ,S \}$ and one non-imaginary ancilla qubit $\ket{0}$, and $S$ is constructed by $\{H,CCZ\}$, some non-imaginary ancilla qubits and the eigenstate of the Pauli-$Y$ matrix $\ket{+i} = \frac{1}{\sqrt{2}} (\ket{0} + i \ket{1})$. Note that $\ket{+i}$ can be recovered after the generation of $S$, and thus one $\ket{+i}$ is sufficient even when we want to apply many $S$ gates.

However, it has not been clarified whether there are other resource states for the universality transformation.
Given the fact that $\{H,CCZ\}$ (with a non-imaginary ancillary state) is sufficient to approximate any real orthogonal matrix, we assume, for simplicity, that a given quantum gate set before the transformation is the set of all real orthogonal operators.
We define a resource state which can be used for the simulation of any unitary matrix as universal resource. The previous work \cite{Takeuchi24} showed that $\ket{+i}$ is universal resource. The problem we try to solve is whether there are universal resources other than $\ket{+i}$. We call resources which can be used only for the simulation of real orthogonal matrices as zero resource. Apparently, there is expected to be some middle region between the universal resource set and the zero resource set. However, we prove that there is no such middle region. In other words, if $\rho$ is not zero resource, $\rho$ is universal resource. We also show that the universal resource is restricted to the maximally imaginary states, which can be transformed into any state with a real completely positive and trace preserving (CPTP) map. This also implies that states other than maximally imaginary states are zero resource. 
Since in imaginarity resource theory, it was known that $\ket{+i}$ is a unique maximally imaginary state up to the free operations \cite{Hicky18, Wu21PRL, Wu21PRA}, our results imply that $\ket{+i}$ is also unique for the universality transformation. Resource theory of imaginarity conventionally focuses on resource measure and state conversion, but this paper studies resource for unitary evolutions and paves the way for dynamical resource theory \cite{gour2020dynamicalresources} of imaginarity.

\textit{Results}---
We derive the necessary and sufficient condition for an imaginary resource state $\rho$ to be usable as a resource for the universality transformation.
To consider a general setup, we do not assume that $\rho$ is pure. Furthermore, we do not assume the catalytic transformation, i.e., $\rho$ can be changed to a different state $\rho^\prime$ after the universality transformation. The goal is to find the condition on $\rho$ and identify all resource states which can be used for the universality transformation.
The situation that $\rho$ can be used as a resource for simulating a unitary matrix $V$, provided that we can use only real orthogonal operators, is defined as follows.
\begin{dfn}
\label{oursituation}
     A unitary matrix $V$ can be simulated by using a resource $\rho$ if and only if the following is satisfied:
     \begin{equation}\label{eq:exact_simulation}
         \begin{aligned}
             &\exists \, \text{real orthogonal matrix } U\ and\ \ket{\mathbf{0}}, \, \forall \ket{\psi}, \, \exists \rho^\prime\ and\ \ket{\mathbf{0^\prime}},\\
             &U (\rho \otimes \ket{\mathbf{0}} \bra{\mathbf{0}} \otimes \ket{\psi} \bra{\psi}) U^\dagger = \rho^\prime \otimes \ket{\mathbf{0^\prime}} \bra{\mathbf{0^\prime}} \otimes V \ket{\psi} \bra{\psi} V^\dagger \, .
         \end{aligned}
     \end{equation}
     Here, $\rho^\prime$ is the resultant state after the simulation, and $\ket{\mathbf{0}}$ and $\ket{\mathbf{0^\prime}}$ are input and output non-imaginary ancilla qubits, respectively.
\end{dfn}
\noindent Based on Definition~\ref{oursituation}, we define zero and universal resources as follows:
\begin{dfn}
\label{zeroanduniversalresource}
    If a unitary matrix $V$ can be simulated with $\rho$, we call $\rho$ as $V$-resource. We say that $\rho$ is zero resource if it is $V$-resource only for real orthogonal matrices $V$.
    On the other hand, if $\rho$ is $V$-resource for any unitary matrix $V$, then we call it universal resource.
\end{dfn}

By using Definition~\ref{zeroanduniversalresource}, our main result is summarized as follows:
\begin{thm}[Main theorem of this paper]
    $\rho$ is not zero resource if and only if $\rho$ is universal resource.
\end{thm}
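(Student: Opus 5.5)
The direction ``universal $\Rightarrow$ not zero'' is immediate, since a universal resource is in particular a $V$-resource for a non-real $V$ such as $S$. For the converse I plan a chain of three implications: (i) if $\rho$ is a $V$-resource for some $V$ that is not a real orthogonal matrix up to a global phase, then $\rho$ and its complex conjugate $\rho^*$ have orthogonal supports; (ii) such a $\rho$ can be converted exactly into $\ket{+i}$ by a real orthogonal unitary acting with real ancillas; (iii) $\ket{+i}$ is an exact universal resource. A $V$ that is real orthogonal up to a phase $e^{i\alpha}$ is realized by the same real orthogonal matrix, since the phase cancels in $V\ket{\psi}\bra{\psi}V^\dagger$. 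So ``not zero resource'' means exactly that some such genuinely non-real $V$ is simulated.

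For (i) I take the complex conjugate of Eq.~\eqref{eq:exact_simulation}. Because $U$ and the ancilla states are real, the same $U$ sends $\rho^*\otimes\ket{\mathbf{0}}\bra{\mathbf{0}}\otimes\ket{\psi^*}\bra{\psi^*}$ to $\rho'^*\otimes\ket{\mathbf{0'}}\bra{\mathbf{0'}}\otimes V^*\ket{\psi^*}\bra{\psi^*}V^T$. Unitaries preserve fidelity, and fidelity is multiplicative over tensor products. Writing $W=V^TV$, which is a symmetric unitary, this gives
\begin{equation}
F(\rho,\rho^*)\,|\psi^T\psi|^2 = F(\rho',\rho'^*)\,|\psi^T W\psi|^2 \quad\text{for all }\ket{\psi}.
\end{equation}
A symmetric unitary has a real orthogonal eigendecomposition $W=O\,\mathrm{diag}(e^{i\theta_k})\,O^T$. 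If all $e^{i\theta_k}$ are equal, then $W=e^{i\theta}I$, and hence $e^{-i\theta/2}V$ is real orthogonal; this is excluded. Otherwise pick $\theta_1\neq\theta_2$ and set $\phi=O^T\psi$ with $\phi_1^2=e^{i\theta_2}$, $\phi_2^2=-e^{i\theta_1}$ and all other entries zero, then normalize. This makes $\psi^TW\psi=0$ while $\psi^T\psi\propto e^{i\theta_2}-e^{i\theta_1}\neq0$. Hence $F(\rho,\rho^*)=0$, i.e., $\rho\perp\rho^*$, which is the maximal-imaginarity condition.

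For (ii) let $P$ be the projector onto $\mathrm{supp}\,\rho$. Then $P^*$ projects onto $\mathrm{supp}\,\rho^*$, and $PP^*=0$. I use the channel
\begin{equation}
X\mapsto \tr(PX)\ket{+i}\bra{+i}+\tr(P^*X)\ket{-i}\bra{-i}+\tr[(I-P-P^*)X]\,I/2 .
\end{equation}
It commutes with complex conjugation, so it is a real CPTP map. I then check that such a map admits a Stinespring dilation by a real orthogonal unitary with real ancilla input $\ket{\mathbf{0}}$; this can be done by using real Kraus operators built from real and imaginary parts. The garbage it leaves becomes part of $\rho'$, which is allowed because Definition~\ref{oursituation} is non-catalytic. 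The channel sends $\rho$ exactly to $\ket{+i}$.

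For (iii), rather than going through $\{H,\Lambda(S)\}$ approximately as in \cite{Takeuchi24}, I will use the exact real encoding. Set $O_V=I\otimes\mathrm{Re}V+J^T\otimes\mathrm{Im}V$ with $J=\begin{bmatrix}0&-1\\1&0\end{bmatrix}$. This matrix is real, and it is orthogonal because $V$ is unitary, so that $(\mathrm{Re}V)^T\mathrm{Re}V+(\mathrm{Im}V)^T\mathrm{Im}V=I$ and the cross terms are symmetric. Since $J^T\ket{+i}=i\ket{+i}$, it satisfies $O_V(\ket{+i}\otimes\ket{\psi})=\ket{+i}\otimes V\ket{\psi}$. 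Composing (ii) and (iii) yields Eq.~\eqref{eq:exact_simulation} for every $V$, so $\rho$ is universal.

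I expect step (i) to be the main obstacle. The naive choice of real $\psi$ only gives $F(\rho,\rho^*)<1$, and the key point is that using complex inputs $\psi$ with $\psi^T\psi\neq0$ forces $F(\rho,\rho^*)=0$. The real-dilation check in (ii) is routine but must be done carefully.
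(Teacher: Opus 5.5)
Your proof is correct, and it takes a genuinely different, more self-contained route than the paper in all three steps.

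\textbf{Step (i).} The paper pairs Eq.~\eqref{eq:exact_simulation} at $\ket{\psi}$ with the conjugated equation at the same $\ket{\psi}$, that is, Eq.~\eqref{eq:exact_simulation} applied to $\ket{\psi^\ast}$ and then conjugated. This is why it first needs Appendix~A, the $\ket{\psi}$-independence of $\rho^\prime$. It then uses eigenstates of $V^TV$ and the midpoint-on-the-unit-circle argument to show that $\tr[\rho\rho^\ast]\neq0$ forces $V^TV=e^{i\eta}I$. You instead pair each input with its own conjugate, so only one $\rho^\prime(\ket{\psi})$ appears and no independence lemma is needed. The resulting bilinear forms $\psi^T\psi$ and $\psi^TW\psi$ then let a single well-chosen $\ket{\psi}$ do the work. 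The real orthogonal diagonalization of the symmetric unitary $W$ that you rely on holds because $\mathrm{Re}\,W$ and $\mathrm{Im}\,W$ are commuting real symmetric matrices.

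\textbf{Step (ii).} Your support-projector measure-and-prepare channel replaces three ingredients of the paper: Appendix~B, the imaginarity-fidelity formula $F_I(\rho)=\frac12+\frac14\|\rho-\rho^\ast\|_1$ of Wu et al., and the real dilation of their Kraus operators in Appendix~D.

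\textbf{Step (iii).} Your encoding $O_V$ simulates any $V$ exactly, in any dimension and including its global phase, with a single real orthogonal matrix, and it returns $\ket{+i}$ intact. Appendix~C instead goes through Takeuchi's $S$ and $\Lambda(S)$ gadgets and a decomposition into rotations and $CZ$.

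\textbf{Comparison.} Your route buys brevity, an exact one-shot construction, and independence from the imaginarity-measure literature. The paper's route buys a direct link to the standard quantifiers (robustness and conversion fidelity) and concrete circuits in the $\{H,CCZ\}$ setting. It also yields the side fact $\tr[\rho^\prime(\rho^\prime)^\ast]=0$, i.e., the leftover state is itself maximally imaginary. Finally, your explicit treatment of global phases makes precise a convention the paper uses only implicitly. Read literally, Definition~\ref{zeroanduniversalresource} would make every $\rho$ non-zero via $V=iI$, and the theorem would then fail for real $\rho$.
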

\begin{proof}
The sufficiency is straightforward. We prove the necessity below.
First, we explain the outline of the proof. We show the fact (1) that if $\rho$ is not zero resource, $\tr[\rho \rho^\ast] =0$ is satisfied. This is the main part of the proof and we explain this later. Then we can easily prove the fact (2) that $\tr[\rho \rho^\ast] =0$ is equivalent to $||\rho -\rho^\ast||_1 =2$, which is explained in the supplemental material \cite{supplement}. We use the fact (3) discovered in the previous work \cite{Wu21PRA} that the fidelity to transform $\rho$ by real operations to $\ket{\hat{+}} = \frac{1}{\sqrt{2}} (\ket{0} + i \ket{1})$, which is a $d$-dimensional maximally imaginary state, can be written as $F_{I} (\rho) = \frac{1}{2} + \frac{1}{4} ||\rho - \rho^\ast ||_{1}$. It can be proved that if $||\rho -\rho^\ast||_1 =2$ is satisfied, $\rho$ can be exactly transformed to $\ket{\hat{+}}$ by a real operation.
We can prove the fact (4) that $\rho$ is universal resource if and only if $\rho$ can be transformed to the $d$-dimensional state $\ket{\hat{+}}$ by a real CPTP map. We prove this in the supplemental material \cite{supplement}.
By using this fact, we can tell that $\rho$ is universal resource. As a result, we obtain Theorem 1.

We show that if $\rho$ is not zero resource, $\rho$ and $\rho^\prime$ must satisfy the condition $\tr[\rho \rho^\ast] = \tr[\rho^\prime (\rho^\prime)^\ast] = 0$.
First, we can say that $\rho^\prime$ is independent of $\ket{\psi}$. This is proved in the supplemental material \cite{supplement} and we use this fact hereafter. Since $U$ is a real matrix, we obtain the following equation by taking the complex conjugate of \eqref{eq:exact_simulation}.
\begin{equation} \label{eq:complex_conjugate}
    U ( \rho^{\ast} \,\otimes \ket{\mathbf{0}} \bra{\mathbf{0}}\otimes \ket{\psi} \bra{\psi}) U^\dagger = (\rho^\prime)^\ast \otimes \ket{\mathbf{0^\prime}} \bra{\mathbf{0^\prime}} \otimes V^{\ast} \ket{\psi} \bra{\psi} V^T \, .
\end{equation}
By taking the Hilbert-Schmidt inner product of \eqref{eq:exact_simulation} and \eqref{eq:complex_conjugate}, we can derive the following condition.
\begin{equation}\label{eq: condition of density matrix1}
     \exists \rho^\prime, \, \forall \ket{\psi}, \, \, \mathrm{tr} [\rho \rho^\ast] = \mathrm{tr} [\rho^{\prime} (\rho^{\prime})^\ast] \cdot | \bra{\psi} V^T V \ket{\psi} |^2 \, .
\end{equation}
Here, since we have shown that $\rho^\prime$ is independent of $\ket{\psi}$, we interchanged the order of the universal quantifier $\forall \ket{\psi}$ and the existential quantifier $\exists \rho^\prime$.
Since $| \bra{\psi} V^T V \ket{\psi} |^2 = 1$ is satisfied when $\ket{\psi}$ is an eigenstate of the unitary matrix $V^T V$, we can derive the equivalent condition to \eqref{eq: condition of density matrix1}:
\begin{equation}\label{eq:simulatability1}
    \begin{aligned}
    \Big( \forall \ket{\psi}, \, \mathrm{tr} [\rho \rho^\ast] = \mathrm{tr} [\rho \rho^\ast] & \cdot | \bra{\psi} V^T V \ket{\psi} |^2\Big) \\
    &\wedge \Big( \mathrm{tr} [\rho \rho^\ast] = \mathrm{tr} [\rho^\prime (\rho^\prime)^\ast] \Big) \, .
    \end{aligned}
\end{equation}
By looking at \eqref{eq:simulatability1}, we can conclude that if $\tr [\rho \rho^\ast] = 0$ is satisfied, no condition is imposed on $V$. We remark that $\tr[\rho^\prime (\rho^\prime)^\ast] = 0$ must be satisfied in this case. Indeed, if $\tr [\rho \rho^\ast] \neq 0$, $V$ is restricted to only real orthogonal matrices, which means that $\rho$ is zero resource. We prove this as follows.
When $\tr [\rho \rho^\ast] \neq 0$, we derive $| \bra{\psi} V^T V \ket{\psi} |^2 = 1$ from \eqref{eq:simulatability1}. This can be deformed to the following equation:
\begin{equation} \label{eq:global phase1}
    \forall \ket{\psi}, \, \bra{\psi} V^T V \ket{\psi} = e^{i \eta(\ket{\psi})}
\end{equation}
where $\eta = \eta(\ket{\psi})$ depends on $\ket{\psi}$ in general. We show that $\eta$ is indeed independent of $\ket{\psi}$. By expanding $\ket{\psi}$ by eigenstates of $V^T V$, we get $\ket{\psi} = \sum_{i} a_{i} \ket{i}$ with $\quad \sum_{i} |a_{i}|^2 =1$.
Since $V^T V$ is a unitary matrix, we can transform \eqref{eq:global phase1} to the following equation if we define $b_{j}$ by $V^T V \ket{j} = e^{i{b_{j}}} \ket{j}$.
\begin{equation}
    \forall \ket{\psi}, \, \sum_ j |a_j|^2 e^{i b_{j}} = e^{i \eta (\ket{\psi})}.
\end{equation}
\begin{enumerate}
    \item When $\ket{\psi} = \ket{00 \cdots 0}$, we obtain
    \begin{equation}
        e^{i b_{00\cdots 0}} = e^{i \eta (\ket{00 \cdots 0})} \, .
    \end{equation}
    \item When $\ket{\psi}$ is not $\ket{00 \cdots 0}$, which we write $\ket{\psi} = \ket{x} \neq e^{iy} \ket{00 \cdots 0}$ for any $y \in \mathbb{R}$,
    \begin{equation}
        e^{i b_{x}} = e^{i \eta (\ket{x})} \, .
    \end{equation}
     We assume $\ket{x} = \ket{10 \cdots 0}$ for example.
    \item When $\ket{\psi} = \frac{1}{\sqrt{2}} (\ket{00\cdots 0} + \ket{x})$,
    \begin{equation}
        \begin{aligned}
            \frac{1}{2} (e^{i b_{00\cdots 0}} + e^{i b_{x}}) &= \frac{1}{2} (e^{i \eta (\ket{00\cdots 0})} + e^{i \eta(\ket{x})} )\\
            &= e^{i \eta (\ket{\psi})}.
        \end{aligned}
     \end{equation}
\end{enumerate}
It is only when two points are the same on a unit circle that the middle point between the two points is on a unit circle. Therefore we can conclude that $\eta$ is independent of $\ket{\psi}$:
\begin{equation}
    e^{i \eta (\ket{00\cdots 0})} = e^{i \eta (\ket{x})} = e^{i \eta (\ket{\psi})}.
\end{equation}
For all $\ket{\psi}$, there exists some real constant $\eta$ such that $\bra{\psi} V^T V \ket{\psi} = e^{i \eta}$. Therefore we get
\begin{equation}
    V^T V = e^{i \eta} I.
\end{equation}
If we set $V^\prime = e^{- i \eta /2} V$, $V^{\prime T} V^\prime = I$ is satisfied, and hence $V^\prime$ is a real orthogonal matrix. We conclude that only real orthogonal matrices except the freedom of global phase $V = e^{i \eta /2} V^\prime$ can be simulated, which means that $\rho$ is zero resource, if $\tr[\rho \rho^\ast] \neq 0$. By taking the contrapositive, we can tell that if $\rho$ is not zero resource, $\tr[\rho \rho^\ast] = \tr[\rho^\prime (\rho^\prime)^\ast]=0$ must be satisfied.

Here, we explain the detail of the fact (3). In the paper of imaginarity resource \cite{Wu21PRL, Wu21PRA}, real states $\mathcal{R} = \{ \rho : \bra{m} \rho \ket{n} \in \mathbb{R} \}$ are defined as free states, where the basis is fixed. In addition, real operations $\Lambda[\rho] = \sum_{j} K_{j} \rho K_{j}^\dagger$ where $\bra{m} K_{j} \ket{n} \in \mathbb{R}$ are defined.
The most imaginary state which maximizes imaginarity measure \cite{Hicky18, Wu21PRA} is known to be $\ket{\hat{+}} = \frac{1}{\sqrt{2}} (\ket{0} + i \ket{1}) = \frac{1}{\sqrt{2}} [1,i,0, \cdots ,0]^T$ of $d$ dimensions. The fidelity to convert $\rho$ to $\sigma$ by a real operation is defined as $F(\rho \rightarrow \sigma) \coloneq \max_\Lambda \{F(\Lambda[\rho], \sigma)\}$, and the imaginarity fidelity is derived as $F_{I} (\rho) \coloneq F(\rho \rightarrow \ket{\hat{+}} \bra{\hat{+}}) = \frac{1}{2} + \frac{1}{4} ||\rho - \rho^\ast ||_{1}$.
We have shown by (1) and (2) that if $\rho$ is not zero resource, $||\rho - \rho^\ast||_1 = 2$ is satisfied, and thus the imaginarity fidelity $F_{I}$ becomes 1. This means that $\rho$ can be exactly transformed to $\ket{\hat{+}}$ by a real operation. It is guaranteed by using the discussion of the previous works \cite{Wu21PRA, Hicky18} that this real operation can be implemented by a real CPTP map: adding non-imaginary ancilla qubits, evolving by real orthogonal matrices and tracing out some qubit systems. This fact is important because we consider a real CPTP map to realize the concrete circuit. See the supplemental material \cite{supplement} for the explanation.
\end{proof}

Note that if $\rho$ is not universal resource, $\rho$ is zero resource even if it possesses some imaginarity. It might be naively expected that there exists intermediate resource between universal and zero resources, but it is surprisingly shown that there is no such intermediate resource as shown in figure \ref{fig:intermediate resource}. We remark that we have assumed the exact simulation which imposes equality in \eqref{eq:exact_simulation}. If we consider approximate simulation, then the definition of universal resource will be changed, and intermediate resource could occur but this problem is beyond the scope of this paper.

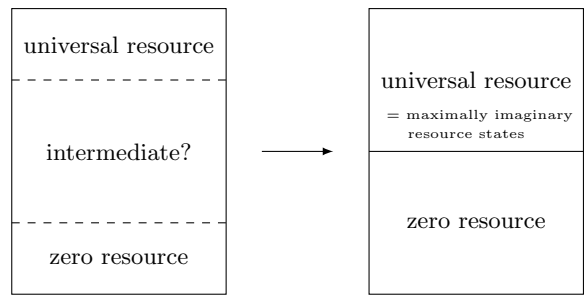
\begin{figure}
\hspace*{-5mm}
\centering
\begin{tikzpicture}[x=0.95cm,y=0.95cm]
    \draw (0,0) rectangle (3,4);
    \draw[dashed](0,1)--(3,1);
    \draw[dashed](0,3)--(3,3);
    \draw[-latex] (3.5,2)--(4.5,2);
    \draw (5,0) rectangle (8,4);
    \draw(5,2)--(8,2);
    \draw (1.5,0.5) node{zero resource};
    \draw (1.5,2) node{intermediate?};
    \draw (1.5,3.5) node{universal resource};
    \draw (6.5,1) node{zero resource};
    \draw (6.5,3) node{universal resource};
    \draw (6.55,2.5) node[font=\tiny]{$=$ maximally imaginary};
    \draw (6.36,2.25) node[font=\tiny]{resource states};
\end{tikzpicture}
\vspace{-0.2cm}
\caption{Naively, there is expected to be an intermediate region between universal resource and zero resource, but actually there is no such region and there are only two types: universal and zero resources. $\mathbf{Left}:$ Naive expectation. $\mathbf{Right}:$ Actual situation.}
\label{fig:intermediate resource}
\end{figure}

In the following proposition, we show the special case of using a single-qubit imaginary resource state.
\begin{prop}
    If $\rho$ is a single-qubit imaginary resource state that satisfies $\tr[\rho \rho^\ast] = 0$, then $\rho$ is restricted to $\ket{+i} \bra{+i}$ or $\ket{-i} \bra{-i}$, where $\ket{\pm i}$ are single-qubit imaginary states $\frac{1}{\sqrt{2}} (\ket{0} \pm i \ket{1})$, respectively.
\end{prop}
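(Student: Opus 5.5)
We parametrize the single-qubit state by its Bloch vector and compute $\tr[\rho\rho^\ast]$ directly. Write
\begin{equation*}
\rho = \frac{1}{2}\left(I + r_x X + r_y Y + r_z Z\right), \qquad r_x^2+r_y^2+r_z^2 \le 1 ,
\end{equation*}
with $r_x,r_y,r_z\in\mathbb{R}$. In the computational basis $X$ and $Z$ are real, while $Y$ is purely imaginary, so $Y^\ast=-Y$. Hence
\begin{equation*}
\rho^\ast = \frac{1}{2}\left(I + r_x X - r_y Y + r_z Z\right).
\end{equation*}
So $\rho^\ast$ is the state whose Bloch vector is reflected through the $xz$-plane.

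Next we use the standard identity $\tr[\sigma\tau] = \frac{1}{2}(1+\vec{s}\cdot\vec{t})$ for qubit states with Bloch vectors $\vec{s},\vec{t}$. It follows from the tracelessness of the Pauli matrices and from $\tr[\sigma_a\sigma_b]=2\delta_{ab}$. Applying it to $\rho$ and $\rho^\ast$ gives
\begin{equation*}
\tr[\rho\rho^\ast] = \frac{1}{2}\left(1 + r_x^2 - r_y^2 + r_z^2\right).
\end{equation*}
The hypothesis $\tr[\rho\rho^\ast]=0$ is therefore equivalent to $r_y^2 = 1 + r_x^2 + r_z^2$.

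Combining this with the positivity constraint $r_x^2+r_y^2+r_z^2\le 1$ gives $1+2(r_x^2+r_z^2)\le 1$. This forces $r_x=r_z=0$, and then $r_y^2=1$, so $r_y=\pm1$. The corresponding states are $\frac{1}{2}(I\pm Y)$. These are exactly the projectors $\ket{+i}\bra{+i}$ and $\ket{-i}\bra{-i}$ onto the eigenstates of the Pauli-$Y$ matrix. We may also note the converse: both states indeed satisfy $\tr[\rho\rho^\ast]=0$, since $\ket{\pm i}^\ast=\ket{\mp i}$ and these two vectors are orthogonal. This confirms that the list is complete and nonempty.

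There is no real obstacle here. The only point needing care is the sign flip of the $Y$ component under complex conjugation; everything else is the elementary Bloch-sphere inner-product formula together with $|\vec r|\le 1$. An alternative route, if one prefers to avoid Bloch vectors, is to write $\tr[\rho\rho^\ast]=\sum_{m,n}|\rho_{mn}|^2\cos(2\arg\rho_{mn})$-type expressions. The Bloch parametrization is cleaner, so I would use it.
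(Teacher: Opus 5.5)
Your proof is correct and follows the paper's own argument. Both compute $\tr[\rho\rho^\ast]=\frac{1}{2}(1+r_x^2-r_y^2+r_z^2)$ in the Bloch parametrization and combine its vanishing with $\|\vec r\|\le 1$ to force $r_y=\pm1$; your explicit step $1+2(r_x^2+r_z^2)\le 1$, which gives $r_x=r_z=0$, and your check of the converse spell out what the paper leaves implicit.
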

\begin{proof}
    The density matrix of a single-qubit mixed state is expressed as follows: $\rho = \frac{1}{2} \left( I + \vec{r} \cdot \vec{\sigma} \right)$ where we denote $I$ as the $2 \times 2$ identity matrix and $\vec{\sigma} = [X, Y, Z]^T$ as the vector whose components are the Pauli matrices. We define the Bloch vector $\vec{r}$ as $\vec{r} = [x,y,z]^T = \left[\mathrm{tr} [\rho X],\mathrm{tr} [\rho Y],\mathrm{tr} [\rho Z] \right]^T$. We remark that $\rho$ is a single-qubit density matrix if and only if $\|\vec{r}\| \leq 1$. Pure states exist on the Bloch sphere $\|\vec{r}\| =1$, and mixed states exist completely inside it, i.e., $\|\vec{r}\| <1$. We can calculate as
    \begin{equation}
        \tr [\rho \rho^\ast] = \frac{1}{2} (1+x^2 -y^2 +z^2) =0 \iff y^2 -1 = x^2 +z^2.
    \end{equation}
    Since $x^2 + z^2$ is non-negative, we can conclude that $y^2 -1 \geq 0$. Since $-1 \leq y \leq 1$ is satisfied, the allowed value of $y$ is only $y= \pm 1$. The correspondence of each value is $y=1 \iff \rho = \ket{+i} \bra{+i}$ and $y=-1 \iff \rho = \ket{-i} \bra{-i}$.
\end{proof}
\noindent We remark that $\ket{-i}$ can be created by acting the Hadamard gate to $\ket{+i}$.


\textit{Conclusion and discussion}---
We have identified every imaginarity resource state for transforming the computational universality to the strict universality, and conclude that if we do not prepare a maximally imaginary state, we can only simulate real orthogonal matrices. Especially, if we consider using a single qubit, we can only use $\ket{\pm i}$ for the universality transformation and can obtain $\ket{+i}$ or $\ket{-i}$ as the resultant state of the transformation.
This resolves an open problem in the previous work \cite{Takeuchi24}, whether other states rather than $\ket{+i}$ can be used as a resource for the universality transformation. 

In resource theory of imaginarity, it is known that $\ket{+i}$ is a unique maximally imaginary state up to free real operations \cite{Hicky18, Wu21PRL, Wu21PRA}. We have shown that $\ket{+i}$ is also unique for realizing at least one non-real quantum gate. 


If we consider the $S$ gate, the necessary input resource state for generating the $S$ gate (i.e., the cost for generating a unitary operator) is $\ket{+i}$, and the most resourceful state which can be created by the $S$ gate (i.e., the power of a unitary operator) is also $\ket{+i}$. However, if we consider a unitary operator with small imaginary components, the cost for generating it and its power may differ in general. In entanglement theory, the difference between the cost and power has been studied \cite{Chen16, Soeda13}. Studying this difference in resource theory of imaginarity would be an interesting future work.

\textit{Acknowledgments}---
We thank Koji Azuma for fruitful discussions.

\nocite{*}

\bibliography{catalyst}

\begin{thebibliography}{25}%
\makeatletter
\providecommand \@ifxundefined [1]{%
 \@ifx{#1\undefined}
}%
\providecommand \@ifnum [1]{%
 \ifnum #1\expandafter \@firstoftwo
 \else \expandafter \@secondoftwo
 \fi
}%
\providecommand \@ifx [1]{%
 \ifx #1\expandafter \@firstoftwo
 \else \expandafter \@secondoftwo
 \fi
}%
\providecommand \natexlab [1]{#1}%
\providecommand \enquote  [1]{``#1''}%
\providecommand \bibnamefont  [1]{#1}%
\providecommand \bibfnamefont [1]{#1}%
\providecommand \citenamefont [1]{#1}%
\providecommand \href@noop [0]{\@secondoftwo}%
\providecommand \href [0]{\begingroup \@sanitize@url \@href}%
\providecommand \@href[1]{\@@startlink{#1}\@@href}%
\providecommand \@@href[1]{\endgroup#1\@@endlink}%
\providecommand \@sanitize@url [0]{\catcode `\\12\catcode `\$12\catcode `\&12\catcode `\#12\catcode `\^12\catcode `\_12\catcode `\%12\relax}%
\providecommand \@@startlink[1]{}%
\providecommand \@@endlink[0]{}%
\providecommand \url  [0]{\begingroup\@sanitize@url \@url }%
\providecommand \@url [1]{\endgroup\@href {#1}{\urlprefix }}%
\providecommand \urlprefix  [0]{URL }%
\providecommand \Eprint [0]{\href }%
\providecommand \doibase [0]{https://doi.org/}%
\providecommand \selectlanguage [0]{\@gobble}%
\providecommand \bibinfo  [0]{\@secondoftwo}%
\providecommand \bibfield  [0]{\@secondoftwo}%
\providecommand \translation [1]{[#1]}%
\providecommand \BibitemOpen [0]{}%
\providecommand \bibitemStop [0]{}%
\providecommand \bibitemNoStop [0]{.\EOS\space}%
\providecommand \EOS [0]{\spacefactor3000\relax}%
\providecommand \BibitemShut  [1]{\csname bibitem#1\endcsname}%
\let\auto@bib@innerbib\@empty
\bibitem [{\citenamefont {Shor}(1997)}]{Shor97}%
  \BibitemOpen
  \bibfield  {author} {\bibinfo {author} {\bibfnamefont {P.~W.}\ \bibnamefont {Shor}},\ }\bibfield  {title} {\bibinfo {title} {Polynomial-time algorithms for prime factorization and discrete logarithms on a quantum computer},\ }\href@noop {} {\bibfield  {journal} {\bibinfo  {journal} {SIAM J. Comput.}\ }\textbf {\bibinfo {volume} {26}},\ \bibinfo {pages} {1484} (\bibinfo {year} {1997})}\BibitemShut {NoStop}%
\bibitem [{\citenamefont {Grover}(1996)}]{Grover96}%
  \BibitemOpen
  \bibfield  {author} {\bibinfo {author} {\bibfnamefont {H.~K.}\ \bibnamefont {Grover}},\ }\bibfield  {title} {\bibinfo {title} {A fast quantum mechanical algorithm for database search},\ }\href@noop {} {\bibfield  {journal} {\bibinfo  {journal} {In \textit{Proc. Twenty-Eighth Annual ACM Symposium on Theory of Computing} \textrm{212-219}}\ } (\bibinfo {year} {Association for Computing Machinery, 1996})}\BibitemShut {NoStop}%
\bibitem [{\citenamefont {Feynman}(1982)}]{Feynman82}%
  \BibitemOpen
  \bibfield  {author} {\bibinfo {author} {\bibfnamefont {R.}~\bibnamefont {Feynman}},\ }\bibfield  {title} {\bibinfo {title} {Simulating physics with computers},\ }\href@noop {} {\bibfield  {journal} {\bibinfo  {journal} {Int. J. Theor. Phys.}\ }\textbf {\bibinfo {volume} {21}},\ \bibinfo {pages} {467–488} (\bibinfo {year} {1982})}\BibitemShut {NoStop}%
\bibitem [{\citenamefont {Lloyd}(1996)}]{Lloyd96}%
  \BibitemOpen
  \bibfield  {author} {\bibinfo {author} {\bibfnamefont {S.}~\bibnamefont {Lloyd}},\ }\bibfield  {title} {\bibinfo {title} {Universal quantum simulators},\ }\href@noop {} {\bibfield  {journal} {\bibinfo  {journal} {Science}\ }\textbf {\bibinfo {volume} {273}},\ \bibinfo {pages} {1073–1078} (\bibinfo {year} {1996})}\BibitemShut {NoStop}%
\bibitem [{\citenamefont {{I. Kassal, S. P. Jordan, P. J. Love, M. Mohseni, and A. Aspuru-Guzik}}(2008)}]{Kassal08}%
  \BibitemOpen
  \bibfield  {author} {\bibinfo {author} {\bibnamefont {{I. Kassal, S. P. Jordan, P. J. Love, M. Mohseni, and A. Aspuru-Guzik}}},\ }\bibfield  {title} {\bibinfo {title} {Polynomial-time quantum algorithm for the simulation of chemical dynamics},\ }\href@noop {} {\bibfield  {journal} {\bibinfo  {journal} {\textit{Proceedings of the National Academy of Sciences} \textrm{105(48) 18681–18686}}\ } (\bibinfo {year} {2008})}\BibitemShut {NoStop}%
\bibitem [{\citenamefont {Nielsen}\ and\ \citenamefont {Chuang}(2010)}]{Nielsen10}%
  \BibitemOpen
  \bibfield  {author} {\bibinfo {author} {\bibfnamefont {M.~A.}\ \bibnamefont {Nielsen}}\ and\ \bibinfo {author} {\bibfnamefont {I.~L.}\ \bibnamefont {Chuang}},\ }\href@noop {} {\emph {\bibinfo {title} {Quantum Computation and Quantum Information 10th Anniversary Edition}}}\ (\bibinfo  {publisher} {Cambridge University Press},\ \bibinfo {year} {2010})\BibitemShut {NoStop}%
\bibitem [{\citenamefont {Watrous}(2018)}]{Watrous18}%
  \BibitemOpen
  \bibfield  {author} {\bibinfo {author} {\bibfnamefont {J.}~\bibnamefont {Watrous}},\ }\href@noop {} {\emph {\bibinfo {title} {The Theory of Quantum Information}}}\ (\bibinfo  {publisher} {Cambridge University Press},\ \bibinfo {year} {2018})\BibitemShut {NoStop}%
\bibitem [{\citenamefont {Aharonov}()}]{Aharonov03}%
  \BibitemOpen
  \bibfield  {author} {\bibinfo {author} {\bibfnamefont {D.}~\bibnamefont {Aharonov}},\ }\bibfield  {title} {\bibinfo {title} {A simple proof that toffoli and hadamard are quantum universal},\ }\href@noop {} {\bibinfo  {journal} {arXiv:quant-ph/0301040}\ }\BibitemShut {NoStop}%
\bibitem [{\citenamefont {{M.-O. Renou, D. Trillo, M. Weilenmann, T. P. Le, A. Tavakoli, N. Gisin, A. Ac\'{\i}n, and M. Navascu\'{e}s}}(2021)}]{Renou21}%
  \BibitemOpen
\bibfield  {journal} {  }\bibfield  {author} {\bibinfo {author} {\bibnamefont {{M.-O. Renou, D. Trillo, M. Weilenmann, T. P. Le, A. Tavakoli, N. Gisin, A. Ac\'{\i}n, and M. Navascu\'{e}s}}},\ }\bibfield  {title} {\bibinfo {title} {Quantum theory based on real numbers can be experimentally falsified},\ }\href@noop {} {\bibfield  {journal} {\bibinfo  {journal} {Nature (London)}\ }\textbf {\bibinfo {volume} {600}},\ \bibinfo {pages} {625} (\bibinfo {year} {2021})}\BibitemShut {NoStop}%
\bibitem [{\citenamefont {{K.-D. Wu, T. V. Kondra, C. M. Scandolo, S. Rana, G.-Y. Xiang, C.-F. Li, G.-C. Guo, and A. Streltsov}}(2024)}]{Wu24}%
  \BibitemOpen
  \bibfield  {author} {\bibinfo {author} {\bibnamefont {{K.-D. Wu, T. V. Kondra, C. M. Scandolo, S. Rana, G.-Y. Xiang, C.-F. Li, G.-C. Guo, and A. Streltsov}}},\ }\bibfield  {title} {\bibinfo {title} {Resource theory of imaginarity in distributed scenarios},\ }\href@noop {} {\bibfield  {journal} {\bibinfo  {journal} {Commun. Phys.}\ }\textbf {\bibinfo {volume} {7}},\ \bibinfo {pages} {171} (\bibinfo {year} {2024})}\BibitemShut {NoStop}%
\bibitem [{\citenamefont {Kitaev}(1997)}]{Kitaev97}%
  \BibitemOpen
  \bibfield  {author} {\bibinfo {author} {\bibfnamefont {A.~Y.}\ \bibnamefont {Kitaev}},\ }\bibfield  {title} {\bibinfo {title} {Quantum computations: Algorithms and error correction},\ }\href@noop {} {\bibfield  {journal} {\bibinfo  {journal} {Russ. Math. Surv.}\ }\textbf {\bibinfo {volume} {52}},\ \bibinfo {pages} {1191} (\bibinfo {year} {1997})}\BibitemShut {NoStop}%
\bibitem [{\citenamefont {Shi}()}]{Shi02}%
  \BibitemOpen
  \bibfield  {author} {\bibinfo {author} {\bibfnamefont {Y.}~\bibnamefont {Shi}},\ }\bibfield  {title} {\bibinfo {title} {Both toffoli and controlled-not need little help to do universal quantum computation},\ }\href@noop {} {\bibinfo  {journal} {arXiv:quant-ph/0205115}\ }\BibitemShut {NoStop}%
\bibitem [{\citenamefont {{M Amy, A. N. Glaudell, S. M. Li, and N. J. Ross}}(2023)}]{Amy23}%
  \BibitemOpen
\bibfield  {journal} {  }\bibfield  {author} {\bibinfo {author} {\bibnamefont {{M Amy, A. N. Glaudell, S. M. Li, and N. J. Ross}}},\ }\bibfield  {title} {\bibinfo {title} {Improved synthesis of toffoli-hadamard circuits},\ }\href@noop {} {\bibfield  {journal} {\bibinfo  {journal} {\textit{Reversible Computation (Lecture Notes in Computer Science)} \textrm{ed M Kutrib and U Meyer (Springer Nature Switzerland), 169-209}}\ } (\bibinfo {year} {2023})}\BibitemShut {NoStop}%
\bibitem [{\citenamefont {Takeuchi}(2024)}]{Takeuchi24}%
  \BibitemOpen
  \bibfield  {author} {\bibinfo {author} {\bibfnamefont {Y.}~\bibnamefont {Takeuchi}},\ }\bibfield  {title} {\bibinfo {title} {Catalytic transformation from computationally universal to strictly universal measurement-based quantum computation},\ }\href {https://doi.org/https://doi.org/10.1103/PhysRevLett.133.050601} {\bibfield  {journal} {\bibinfo  {journal} {Phys.\ Rev.\ Lett.}\ }\textbf {\bibinfo {volume} {133}},\ \bibinfo {pages} {050601} (\bibinfo {year} {2024})}\BibitemShut {NoStop}%
\bibitem [{\citenamefont {Hickey}\ and\ \citenamefont {Gour}(2018)}]{Hicky18}%
  \BibitemOpen
  \bibfield  {author} {\bibinfo {author} {\bibfnamefont {A.}~\bibnamefont {Hickey}}\ and\ \bibinfo {author} {\bibfnamefont {G.}~\bibnamefont {Gour}},\ }\bibfield  {title} {\bibinfo {title} {Quantifying the imaginarity of quantum mechanics},\ }\href {https://doi.org/10.1088/1751-8121/aabe9c} {\bibfield  {journal} {\bibinfo  {journal} {J. Phys. A Math. Theor.}\ }\textbf {\bibinfo {volume} {51}},\ \bibinfo {pages} {414009} (\bibinfo {year} {2018})}\BibitemShut {NoStop}%
\bibitem [{\citenamefont {{K.-D. Wu, T. V. Kondra, S. Rana, C. M. Scandolo, G.-Y. Xiang, C.-F. Li, G.-C. Guo, and A. Streltsov}}(2021{\natexlab{a}})}]{Wu21PRL}%
  \BibitemOpen
  \bibfield  {author} {\bibinfo {author} {\bibnamefont {{K.-D. Wu, T. V. Kondra, S. Rana, C. M. Scandolo, G.-Y. Xiang, C.-F. Li, G.-C. Guo, and A. Streltsov}}},\ }\bibfield  {title} {\bibinfo {title} {Operational resource theory of imaginarity},\ }\href {https://doi.org/https://doi.org/10.1103/PhysRevA.103.032401} {\bibfield  {journal} {\bibinfo  {journal} {Phys.\ Rev.\ Lett.}\ }\textbf {\bibinfo {volume} {126}},\ \bibinfo {pages} {090401} (\bibinfo {year} {2021}{\natexlab{a}})}\BibitemShut {NoStop}%
\bibitem [{\citenamefont {{K.-D. Wu, T. V. Kondra, S. Rana, C. M. Scandolo, G.-Y. Xiang, C.-F. Li, G.-C. Guo, and A. Streltsov}}(2021{\natexlab{b}})}]{Wu21PRA}%
  \BibitemOpen
  \bibfield  {author} {\bibinfo {author} {\bibnamefont {{K.-D. Wu, T. V. Kondra, S. Rana, C. M. Scandolo, G.-Y. Xiang, C.-F. Li, G.-C. Guo, and A. Streltsov}}},\ }\bibfield  {title} {\bibinfo {title} {Resource theory of imaginarity: Quantification and state conversion},\ }\href {https://doi.org/https://doi.org/10.1103/PhysRevLett.126.090401} {\bibfield  {journal} {\bibinfo  {journal} {Phys.\ Rev.\ A}\ }\textbf {\bibinfo {volume} {103}},\ \bibinfo {pages} {032401} (\bibinfo {year} {2021}{\natexlab{b}})}\BibitemShut {NoStop}%
\bibitem [{\citenamefont {Gour}\ and\ \citenamefont {Scandolo}()}]{gour2020dynamicalresources}%
  \BibitemOpen
  \bibfield  {author} {\bibinfo {author} {\bibfnamefont {G.}~\bibnamefont {Gour}}\ and\ \bibinfo {author} {\bibfnamefont {C.~M.}\ \bibnamefont {Scandolo}},\ }\bibfield  {title} {\bibinfo {title} {Dynamical resources},\ }\href@noop {} {\bibinfo  {journal} {arXiv:2101.01552}\ }\BibitemShut {NoStop}%
\bibitem [{sup()}]{supplement}%
  \BibitemOpen
\bibfield  {journal} {  }\href@noop {} {\bibinfo  {journal} {See Supplemental Material, which includes the proof that $\rho^\prime$ is independent of $\ket{\psi}$, the proof of (2) of Theorem 1, the proof of the equivalence between universal resource and maximally imaginary states, and a note which shows that a real operation for transforming $\rho$ to $\ket{\hat{+}}$ is realized by adding ancilla, evolving by a real orthogonal matrix and tracing out subsystems}\ }\BibitemShut {NoStop}%
\bibitem [{\citenamefont {Chen}\ and\ \citenamefont {Yu}(2016)}]{Chen16}%
  \BibitemOpen
\bibfield  {journal} {  }\bibfield  {author} {\bibinfo {author} {\bibfnamefont {L.}~\bibnamefont {Chen}}\ and\ \bibinfo {author} {\bibfnamefont {L.}~\bibnamefont {Yu}},\ }\bibfield  {title} {\bibinfo {title} {Entanglement cost and entangling power of bipartite unitary and permutation operators},\ }\href {https://doi.org/https://doi.org/10.1103/PhysRevA.93.042331} {\bibfield  {journal} {\bibinfo  {journal} {Phys.\ Rev.\ A}\ }\textbf {\bibinfo {volume} {93}},\ \bibinfo {pages} {042331} (\bibinfo {year} {2016})}\BibitemShut {NoStop}%
\bibitem [{\citenamefont {Soeda}\ and\ \citenamefont {Murao}(2013)}]{Soeda13}%
  \BibitemOpen
  \bibfield  {author} {\bibinfo {author} {\bibfnamefont {A.}~\bibnamefont {Soeda}}\ and\ \bibinfo {author} {\bibfnamefont {M.}~\bibnamefont {Murao}},\ }\bibfield  {title} {\bibinfo {title} {Comparing the globalness of bipartite unitary operations: delocalisation power, entanglement cost and entangling power},\ }\href@noop {} {\bibfield  {journal} {\bibinfo  {journal} {Mathematical Structures in Computer Science $\mathbf{23}$, 454-470}\ } (\bibinfo {year} {2013})}\BibitemShut {NoStop}%
\bibitem [{\citenamefont {Gidney}\ and\ \citenamefont {Fowler}()}]{Gidny17}%
  \BibitemOpen
  \bibfield  {author} {\bibinfo {author} {\bibfnamefont {C.}~\bibnamefont {Gidney}}\ and\ \bibinfo {author} {\bibfnamefont {A.}~\bibnamefont {Fowler}},\ }\bibfield  {title} {\bibinfo {title} {A slightly smaller surface code {S} gate},\ }\href@noop {} {\bibinfo  {journal} {arXiv:1708.00054}\ }\BibitemShut {NoStop}%
\bibitem [{\citenamefont {{A. Y. Kitaev, M. N. Vyalyi, and A. H. Shen}}(2002)}]{Kitaev02}%
  \BibitemOpen
\bibfield  {journal} {  }\bibfield  {author} {\bibinfo {author} {\bibnamefont {{A. Y. Kitaev, M. N. Vyalyi, and A. H. Shen}}},\ }\href@noop {} {\emph {\bibinfo {title} {Classical and Quantum Computation}}}\ (\bibinfo  {publisher} {American Mathematical Society},\ \bibinfo {year} {2002})\BibitemShut {NoStop}%
\bibitem [{\citenamefont {Raussendorf}\ and\ \citenamefont {Briegel}(2001)}]{Raussendorf01}%
  \BibitemOpen
  \bibfield  {author} {\bibinfo {author} {\bibfnamefont {R.}~\bibnamefont {Raussendorf}}\ and\ \bibinfo {author} {\bibfnamefont {H.~J.}\ \bibnamefont {Briegel}},\ }\bibfield  {title} {\bibinfo {title} {A one-way quantum computer},\ }\href {https://doi.org/https://doi.org/10.1103/PhysRevLett.86.5188} {\bibfield  {journal} {\bibinfo  {journal} {Phys.\ Rev.\ Lett.}\ }\textbf {\bibinfo {volume} {86}},\ \bibinfo {pages} {5188} (\bibinfo {year} {2001})}\BibitemShut {NoStop}%
\bibitem [{\citenamefont {{R. Takagi, T. J. Yoder, and I. L. Chuang}}(2017)}]{Takagi17}%
  \BibitemOpen
  \bibfield  {author} {\bibinfo {author} {\bibnamefont {{R. Takagi, T. J. Yoder, and I. L. Chuang}}},\ }\bibfield  {title} {\bibinfo {title} {Error rates and resource overheads of encoded three-qubit gates},\ }\href {https://doi.org/https://doi.org/10.1103/PhysRevA.96.042302} {\bibfield  {journal} {\bibinfo  {journal} {Phys.\ Rev.\ A}\ }\textbf {\bibinfo {volume} {96}},\ \bibinfo {pages} {042302} (\bibinfo {year} {2017})}\BibitemShut {NoStop}%
\end{thebibliography}%

\clearpage
\onecolumngrid

\section{Appendix A: Proof of the $\ket{\psi}$-independence of $\rho^\prime$}
\begin{proof}
An arbitrary $r$-qubit state $\ket{\psi}$ can be expressed as an qubit as
\begin{equation}
    \ket{\psi} = c_{00 \cdots 0} \ket{00 \cdots 0} + \sum_{x \neq 00 \cdots 0} c_{x} \ket{x} = c_0 \ket{\phi_0} + c_1 \ket{\phi_1} \, .
\end{equation}
Here we set $\ket{\phi_0}= \ket{00 \cdots 0}$ and $\ket{\phi_1}=\frac{1}{c_1} \ket{\phi^\prime}$ where $\ket{\phi^\prime} = \sum_{x \neq 00 \cdots 0} c_{x} \ket{x}$ and $c_1 = \sqrt{\braket{\phi^\prime}{\phi^\prime}}$.

There are three cases to consider: (1) $c_{1}=0$, (2) $c_{0}=0$ and (3) $c_{0} \neq 0 \wedge c_{1} \neq 0$.

In case (1), it is obvious that $\rho^\prime (\ket{\phi_{0}}) = \rho^\prime (\ket{\psi})$ since $\ket{\psi} = \ket{\phi_{0}}$.

In case (3), $\ket{\psi}$ is neither parallel to $\ket{\phi_{0}}$ nor to $\ket{\phi_{1}}$. In this case, we can expand $\ket{\psi} \bra{\psi}$ as $\ket{\psi} \bra{\psi} = 2|c_{0}|^2 \ket{\phi_{0}} \bra{\phi_{0}} + 2 |c_{1}|^2 \ket{\phi_{1}}\bra{\phi_{1}} - \ket{\phi_{2}} \bra{\phi_{2}}$ if we set $\ket{\phi_2} \equiv [c_{0}, - c_{1}]^T$.

We rewrite this as $\ket{\psi} \bra{\psi} = \alpha \ket{\phi_{0}} \bra{\phi_{0}} + \beta \ket{\phi_{1}}\bra{\phi_{1}} - \ket{\phi_{2}} \bra{\phi_{2}}$ where $\alpha=2|c_{0}|^2 \neq 0$ and $\beta = 2 |c_{1}|^2 \neq 0$.

By comparing the first term and the last term of the following equation:
\begin{equation}\label{eq.1}
    \begin{aligned}
        &\alpha \rho^\prime(\ket{\psi}) \otimes \ket{\mathbf{0^\prime}} \bra{\mathbf{0^\prime}} \otimes V \ket{\phi_{0}} \bra{\phi_{0}} V^\dagger
        + \beta \rho^\prime(\ket{\psi}) \otimes \ket{\mathbf{0^\prime}} \bra{\mathbf{0^\prime}} \otimes V \ket{\phi_{1}} \bra{\phi_{1}} V^\dagger
        - \rho^\prime(\ket{\psi}) \otimes \ket{\mathbf{0^\prime}} \bra{\mathbf{0^\prime}} \otimes V \ket{\phi_{2}} \bra{\phi_{2}} V^\dagger\\
        &= \rho^\prime(\ket{\psi}) \otimes \ket{\mathbf{0^\prime}} \bra{\mathbf{0^\prime}} \otimes V \ket{\psi} \bra{\psi} V^\dagger
        = U (\rho \otimes \ket{\mathbf{0}}\bra{\mathbf{0}} \otimes \ket{\psi} \bra{\psi}) U^\dagger\\
        &= \alpha \rho^\prime (\ket{\phi_{0}}) \otimes \ket{\mathbf{0^\prime}} \bra{\mathbf{0^\prime}} \otimes V \ket{\phi_{0}} \bra{\phi_{0}} V^\dagger
        + \beta \rho^\prime (\ket{\phi_{1}}) \otimes \ket{\mathbf{0^\prime}} \bra{\mathbf{0^\prime}} \otimes V \ket{\phi_{1}} \bra{\phi_{1}} V^\dagger
        - \rho^\prime (\ket{\phi_{2}}) \otimes \ket{\mathbf{0^\prime}} \bra{\mathbf{0^\prime}} \otimes V \ket{\phi_{2}} \bra{\phi_{2}} V^\dagger \, ,
    \end{aligned}
\end{equation}

we can tell that $\rho^\prime (\ket{\phi_{0}}) = \rho^\prime (\ket{\psi})$ for any $\ket{\psi}$, which is neither parallel to $\ket{\phi_{0}}$ nor to $\ket{\phi_{1}}$. Therefore, we have proved in case (3).

In case (2), for any $\ket{\phi_1}$ such that $\braket{\phi_0}{\phi_1}=0$, we set $\ket{\psi}=\frac{1}{\sqrt{2}}(\ket{\phi_0}+\ket{\phi_1})$. Since this $\ket{\psi}$ satisfies the condition of case (3), we find that $\rho'(\ket{\phi_0})=\rho'(\ket{\phi_1})$ by using \eqref{eq.1}. Therefore, we have proved in case (2).


In all cases, we have succeeded to prove $\rho^\prime (\ket{\phi_{0}}) = \rho^\prime (\ket{\psi})$. Therefore, we conclude that $\rho^\prime$ is independent of $\ket{\psi}$.
\end{proof}

\section{Appendix B: Proof of the equivalence of $\tr[\rho \rho^\ast] = 0$ and $||\rho - \rho^\ast||_1 = 2$}
\begin{proof}
We show that $\tr[\rho \sigma] = 0$ is equivalent to $||\rho - \sigma||_1 =2$.

$\longrightarrow$ ) By expanding $\rho$ and $\sigma$ by their orthonormal bases respectively, we can write down $\rho = \sum_{i} p_{i} \ket{e_{i}} \bra{e_{i}}$ and $\sigma = \sum_{j} q_{j} \ket{f_{j}} \bra{f_{j}}$ where $p_{i}, q_{j} \geq 0$. Then we can write $\tr[\rho \sigma] = \sum_{ij} p_{i} q_{j} |\braket{e_{i}}{f_{j}}|^2 = 0$. Therefore, we can derive that $\forall i,j, \, p_{i} q_{j} |\braket{e_{i}}{f_{j}}|^2 =0$. If we focus on the supports: $\text{supp} [\rho] = \text{Span} \{ \ket{e_{i}}: p_{i} > 0 \}$ and $\text{supp} [\sigma] = \text{Span} \{ \ket{f_{j}}: q_{j} > 0 \}$, it follows that if $p_{i} >0$ and $q_{j} >0$, then $\braket{e_{i}}{f_{j}} =0$. It is concluded that $\rho - \sigma$ can be transformed to block-diagonal by some unitary matrix: $U (\rho - \sigma) U^\dagger = diag[d_{1} \cdots d_k \, d^\prime_{1} \cdots d^\prime_{k^\prime}]$ where $\{d_{i}\}$ and $\{d^\prime_{j}\}$ are the eigenvalues of $\rho$ and $-\sigma$ respectively. Thus, we can conclude $||\rho -\sigma||_{1} = \sum_{i}|d_{i}| + \sum_{j} |d^\prime_{j}| = 2$.

$\longleftarrow$ ) We use the fact that the Schatten $p$-norm and $p^\ast$-norm are dual and related as $||X||_{p} = \max_{||Y||_{{p^\ast}} \leq 1} |\tr[XY^\dagger]|$ where $\frac{1}{p} + \frac{1}{p^\ast} = 1$ is satisfied \cite{Watrous18}.
By using this, we can calculate as follows:
\begin{equation}
    2=||\rho - \sigma||_1 = \max_{||Y||_{\infty} \leq 1} |\tr[(\rho- \sigma) Y^\dagger]| = \max_{0 \leq M \leq \mathbf{1}}|\tr[\rho -\sigma] - 2\tr[(\rho-\sigma)M]| = \max_{0 \leq M \leq \mathbf{1}} 2|\tr[(\rho- \sigma) M]| \, .
\end{equation}
In the third equality, we rearranged the equation by $- \mathbf{1} \leq Y \leq \mathbf{1} \iff (Y = \mathbf{1} - 2M) \, \wedge \, (0 \leq M \leq \mathbf{1})$ for some POVM operator $M$. Since $0 \leq \tr[\pi M] \leq 1$ is satisfied for any density matrix $\pi$, we can conclude that $\tr[\rho M] =1$ and $\tr[\sigma M] =0$. Since $\rho$ lies in the Hilbert space orthogonal to $\mathbf{1} - M$ and $\sigma$ lies in the Hilbert space orthognal to $M$, $\rho$ and $\sigma$ are orthogonal to each other: $\tr[\rho \sigma] =0$.
\end{proof}

\section{Appendix C: Proof of the equivalence between universal resource and maximally imaginary states, which can be transformed to $\ket{\hat{+}}$ by a real CPTP map}
\begin{proof}
$\longrightarrow$) If we have a universal resource, by using it, we can simulate any unitary operator including the $S$ gate. Therefore, we can generate $\ket{\hat{+}}$ by acting the $S$ gate to $\ket{+} \equiv (\ket{0} + \ket{1})/\sqrt{2} = H\ket{0}$.

$\longleftarrow$) The $S$ gate and $\Lambda(S)$ gate can be implemented by the circuits of FIG. 3 and FIG. 1 of \cite{Takeuchi24}, respectively with $\ket{0}$ and $\ket{\hat{+}}$ as inputs. We can perform free operations including the rotation operator around the $y$ axis $R_{y}(\theta)$. We can also implement the rotation operator around the $x$ axis as $R_{x} (\theta) = S^\dagger R_{y}(\theta) S$. From the rotation operators around the $x$ axis and $y$ axis, we can generate any unitary operator in $SU(2)$. We can also implement the controlled-$Z$ gate as $\Lambda(Z) = \Lambda(S)^2$. Therefore, we can generate any unitary operator in $SU(2^m)$ for any positive number $m$. As a result, $\rho$ is universal resource.
\end{proof}

\section{Appendix D: Remark on the construction of a real operation from $\rho$ to $\ket{\hat{+}}$}
In the proof of Theorem 1 of our paper, we use the expression of imaginarity fidelity shown in the previous work \cite{Wu21PRA}:
\begin{equation}
    F_{I}(\rho) = F(\rho \rightarrow \ket{\hat{+}} \bra{\hat{+}}) = \frac{1}{2} (1 + \mathscr{I}_{R} (\rho)) = \frac{1}{2} + \frac{1}{4} ||\rho - \rho^T||_1 \, .
\end{equation}
Here, the third equality was proved in Proposition 5 and the second equality was proved in Theorem 3 in \cite{Wu21PRA}. In resource theory of imaginarity, real states are defined as $\mathcal{R} = \{ \rho: \bra{m} \rho \ket{n} \in \mathbb{R} \}$ in some fixed basis $\{ \ket{m} \}$, and a real operation $\Lambda$ is defined with real Kraus operators $K_{i}$ as $\Lambda[\rho] = \sum_{i} K_{i} \rho K_{i}^\dagger$ where $\bra{m} K_{i} \ket{n} \in \mathbb{R}$. The robustness of imaginarity is defined as $\mathscr{I}_{R}(\rho) = \min_{\tau} \{ s \geq 0 : \frac{\rho + s \tau}{1+s} \in \mathcal{R}\}$ where $\tau$ is taken over all quantum states. $F(\rho \rightarrow \ket{\hat{+}} \bra{\hat{+}})$ is defined as
\begin{equation}
    F(\rho \rightarrow \ket{\hat{+}} \bra{\hat{+}} ) = \max_{\Lambda} \{ F(\Lambda[\rho], \ket{\hat{+}} \bra{\hat{+}}) \}
\end{equation}
where the right-hand side is the fidelity maximized over real operations $\Lambda$. The construction of an actual real operation which realizes this maximization is shown in Theorem 3 of \cite{Wu21PRA}. In this appendix, we note that this construction of a real operation is realized by adding an ancilla, a real unitary evolution and tracing out a subsystem. This is important because we use this fact in the main paper.

In the proof of theorem 3 of \cite{Wu21PRA}, the fact that for any real operation $\Lambda$, the fidelity is upper bounded as
\begin{equation}
    F(\Lambda[\rho], \ket{\hat{+}}\bra{\hat{+}}) = \bra{\hat{+}} \Lambda[\rho] \ket{\hat{+}} \leq \frac{1}{2} (1 + \mathscr{I}_{R}(\rho)) \,.
\end{equation}
Then it is proved that this upper bound is realized by the following concrete real Kraus operators. When the dimension $d$ is even,
\begin{equation}
    K_{m} = \ket{1} \bra{2m} + \ket{0} \bra{2m+1} \quad m=0,1, \cdots , d/2 -1 \,.
\end{equation}
When $d$ is odd, for $m \leq \lfloor d/2 \rfloor -1$ we define $K_{m}$ as above and we define
\begin{equation}
    K_{\lfloor d/2 \rfloor} = \ket{0} \bra{d-1} \,.
\end{equation}

From here, we consider even dimensions. The same argument applies in odd dimensions as well. We construct a real unitary matrix $U_{AE}$ as shown in \cite{Hicky18} as $K_{m} = _{E}\bra{m} U_{AE} \ket{0}_{E}$. If we construct $U_{AE}$ as
\begin{equation}
    U_{AE} =\begin{bmatrix}
   K_{0} & \cdots \\
   K_{1} & \cdots \\
   \vdots & \vdots \\
   K_{d/2 -1} & \cdots
\end{bmatrix}  \,,
\end{equation}
where the remaining $d^2 /2 -d$ column vectors are constructed so that they form an orthonormal basis, we obtain a real unitary matrix as $U_{AE}$ since $U_{AE}^\dagger U_{AE} = I$.
For some quantum state $\rho$, a real operation $\Lambda$ can be expressed as
\begin{equation}
    \begin{aligned}
        \Lambda[\rho] &= \sum_{m} K_{m} \rho K_{m}^\dagger\\
        &= \tr_{E} \left[ U_{AE} (\rho \otimes \ket{0} _{E} \bra{0}) U_{AE}^\dagger \right] \,,
    \end{aligned}
\end{equation}
which means that the real operation $\Lambda$ can be realized by the combination of adding ancilla qubits $\ket{0}_{E}\bra{0}$, the evolution by a real unitary matrix $U_{AE}$ and tracing out the environment system.

\end{document}